\newcommand{\comment}[1]{\marginpar{\phantom{\small{#1}}}}
\newcommand{\arena}{\ensuremath{\mathcal{A}}}
\newcommand{\Win}{\ensuremath{\mathit{Win}}}
\newcommand{\Reach}{\ensuremath{\mathit{Reach}}}
\newcommand{\Outcome}{\ensuremath{\mathit{Outcome}}}
\newcommand{\N}{\ensuremath{\mathbb{N}}}
\newcommand{\view}{\ensuremath{\mathit{view}}}
\newcommand{\Views}{\ensuremath{\mathcal{V}}}
\newcommand{\obs}{\ensuremath{\mathit{obs}}}
\newcommand{\equivclass}[1]{[#1]_\equiv}
\newcommand{\Act}{\ensuremath{\mathbb{M}}}
\newcommand{\Config}{\ensuremath{\mathcal{C}}}
\newcommand{\Agather}{\ensuremath{\arena_{\textrm{gather}}}}
\newcommand{\rep}{\ensuremath{\mathit{rep}}}
\newcommand{\Sum}[3]{\sum\limits_{#1}^{#2} #3}
\newcommand{\clockwise}{\ensuremath{\curvearrowright}}
\newcommand{\counterclockwise}{\ensuremath{\curvearrowleft}}
\newcommand{\still}{\ensuremath{\uparrow}}
\newcommand{\disoriented}{?}
\newcommand{\Actions}{\ensuremath{\Delta}}
\newcommand{\PosTower}{\ensuremath{\mathit{PosTower}}}
\newcommand{\Pos}{\ensuremath{\mathit{Pos}}}
\newenvironment{theorem-repeat}[1]{\begin{trivlist}
\item[\hspace{\labelsep}{\bf\noindent Theorem~\ref{#1} }]}%
{\end{trivlist}}
\newcommand{\toto}{xxx}
\title{On the Synthesis of Mobile Robots Algorithms: \\
the Case of Ring Gathering}
\author{Laure Millet\inst{1,2} \and Maria Potop-Butucaru\inst{1,2} \and Nathalie Sznajder\inst{1,2} \and Sebastien Tixeuil\inst{1,2,3}}
\institute{Sorbonne Universit\'es, UPMC Univ Paris 06, UMR 7606, LIP6, F-75005, Paris, France
\and CNRS, UMR 7606, LIP6, F-75005, Paris, France
\and Institut Universitaire de France}
\begin{document}
\pagestyle{plain}

\thispagestyle{empty}
  
\maketitle

\begin{abstract}
Recent advances in Distributed Computing highlight models and algorithms for autonomous swarms of mobile robots that self-organize and cooperate to solve global objectives. The overwhelming majority of works so far considers \comment{consists in?}handmade algorithms and correctness proofs. 

 This paper is the first to propose a formal framework to automatically design distributed algorithms that are dedicated to autonomous mobile robots evolving in a discrete space. As a case study, we consider the problem of gathering all robots at a particular location, not known beforehand. Our contribution is threefold. First, we propose an encoding of the gathering problem as a reachability game. Then, we automatically generate an optimal distributed algorithm for three robots evolving on a fixed size uniform ring. Finally, we prove by induction that the generated algorithm is also correct for any ring size except when an impossibility result holds (that is, when the number of robots divides the ring size).
\end{abstract}


\section{Introduction}
\label{sec:in}
The Distributed Computing community, motivated by the variety of tasks that can be performed by autonomous robots and their complexity, started recently to propose formal models for these systems and to design and prove protocols in these models. The seminal paper by Suzuki \& Yamashita~\cite{suzuki_distributed_1996} proposes a robot model, two execution models, and several algorithms (with associated correctness proofs) for gathering and scattering a set of robots. In their model, robots are identical and anonymous (they execute the same deterministic algorithm and they cannot be distinguished using their appearance), robots are oblivious (they have no memory of their past actions) and they have neither a common sense of direction, nor a common handedness (chirality). Furthermore robots do not communicate in an explicit way. However they have the ability to sense the environment and see the position of the other robots, which lets  them find their way in their environment. Also, robots execute three-phase cycles: \textit{Look}, \textit{Compute} and \textit{Move}. During the \textit{Look} phase robots take a snapshot of the other robots' positions. The collected information is used in the \textit{Compute} phase in which robots decide to move or to stay idle. In the \textit{Move} phase, robots may move to a new position computed in the previous phase. The two execution models are denoted (using recent taxonomy~\cite{FPS12b}) FSYNC, for fully synchronous, and SSYNC, for semi-synchronous. In the SSYNC model an arbitrary non-empty subset of robots execute the three phases synchronously and atomically. In the FSYNC model all robots execute the three phases synchronously. 
 
A recent trend, motivated by practical applications such that exploration or surveillance, is the study of robots evolving  in a discrete space with a \textit{finite} number of locations. This discrete space is modeled by a graph, where nodes represent locations or sites, and edges represent the possibility for a robot to move from one site to the other. The discrete setting significantly increases the number of symmetric configurations when the underlying graph is also symmetric (\emph{e.g.} a ring). 

One of the benchmarking~\cite{FPS12b} problems for mobile robots evolving in a discrete space is that of \emph{gathering}. Regardless of their initial positions, robots have to move in such a way that they are eventually located on the same location, not known beforehand, and remain there thereafter. The case of ring networks is especially intricate, since its regular structure introduces a number of possible symmetric situations, from which the limited abilities of robots make it difficult to escape. A particular disposal (or configuration) of robots in the ring is \emph{symmetrical} if there exists an axis of symmetry, that maps single robots into single robots, multiplicities into multiplicities, and empty nodes into empty nodes. A symmetric configuration can be edge-edge, node-edge or node-node symmetrical if the axis goes through two edges, through one node and one edge, or through two nodes, respectively. A \emph{periodic} configuration is a configuration that is invariant by non-trivial rotation. 

On the negative side, it was shown~\cite{markou} that gathering is impossible when the algorithm run by every robot is deterministic and there are only two robots, or if the initial configurations are periodic, or edge-edge symmetric, or if the ability for a robot to detect multiple robots on a single location (denoted as \emph{multiplicity detection}) is not available. Running a probabilistic algorithm~\cite{OT12c} permits to start from an arbitrary initial configuration (including periodic and edge-edge symmetric) but still requires multiplicity detection. In the deterministic setting, a number of ring gathering algorithms have been proposed in the literature~\cite{KameiLOT12,navarradisc2012,navarraipdps2013,navarrasirocco2013,navarra2014} for the cases left open by impossibility results, focusing on the problem solvability for different initial configurations and different values for the size of the ring and the number of robots. When the robots are able to fully detect the number of robots in each location, a unified strategy was proposed~\cite{navarradisc2012}. When multiplicity detection is only available on the current position of each robot, more involved and specific approaches~\cite{IIKO10c,KameiLOT11,KameiLOT12,navarra2014} are needed. Every aforementioned deterministic solution considers problem solvability with particular hypotheses, and does not consider performance issues (such as time needed to reach gathering, or the total number of moves before gathering is achieved). Also, only a handmade approach for both algorithm design and proof of correctness was considered in those works.

Most related to our concern are recent approaches to mechanizing the algorithm design or the correctness proof in the context of autonomous mobile robots~\cite{BDPPT12c,DLPRT12c,BMPTT13c,ABCTU13c}. Model-checking proved useful to find bugs in existing litterature~\cite{BMPTT13c} and formally assess published algorithms~\cite{DLPRT12c,BMPTT13c}. Proof assistants enabled the use of high order logic to certify impossibility results~\cite{ABCTU13c}. To our knowledge, the only previous attempt to automatically generate mobile robots algorithms (for the problem of perpetual exclusive exploration) is due to Bonnet \emph{et al.}~\cite{BDPPT12c}, but exhibits important limitations for studying the gathering problem. Indeed their approach is brute force (it generate every possible algorithm in a particular setting, regardless of the problem to solve) and specific to configurations where \emph{(i)} a location can only host one robot (so, gathering cannot be expressed), and \emph{(ii)} no symmetry appears.
 
\textit{Games and protocols synthesis.} In the formal methods community, 
automatically synthesizing programs that would be correct by design is a problem that raised interest early~\cite{emersonclarke:1981,MannaWolper84,AbadiLamportWolper89,PnueliRosner89a}. Actually, this problem goes back to Church~\cite{Church62,BuchiLandweber69}. When the program to generate is intended to work in an open system, maintaining an on-going interaction with a (partially) unknown environment, it is known since \cite{BuchiLandweber69} that seeing the problem as a \emph{game} between the system and the environment is a successful approach. The system and its environment are considered as opposite players that play a game on some graph, the winning condition being the specification the system should fulfill however the environment behave. Then, the classical problem in game theory of determining winning strategies for the players is equivalent to find how the system should act in any situation, in order to always satisfy its specification. The case of mobile autonomous robots that we focus on in this paper falls in this category of problems: the robots may evolve (possibly indefinitely) on the ring, making decisions based on the global state of the system at each time instant. The vertices of graph on which the players will play would then be some representation of the different global positions of the robots on the ring. The presence of an opposite player (or environment) is motivated by the absence of chirality of the robots: when a robot is on an axis of symmetry, it is unable to distinguish its two sides one from another, hence to choose exactly \emph{where}
it moves ; this decision is supposed to be taken by the opposite player.

\textit{Our contribution.} In this paper, we introduce the use of formal methods for automatic synthesis of autonomous mobile robot algorithms, in the discrete space model. As a case study, we consider the problem of gathering all robots at a particular location, not known beforehand. Our contribution is threefold. First, we propose an encoding of the gathering problem as a reachability game, the players being the robot algorithm on the one side and the scheduling adversary (that is also capable for dynamically deciding robot chirality at every activation) on the other side. Our encoding is general enough to encompass classical FSYNC and SSYNC execution models for robots evolving on ring-shaped networks, including (and contrary to the existing ad hoc solution~\cite{BDPPT12c}) when several robots are located at the same node and when symmetric situations occurs. Then, in the FSYNC model, we automatically generate an \emph{optimal} distributed algorithm for three robots evolving on a fixed size uniform ring. Our optimality criterion refers to the number of robot moves that are necessary to actually achieve gathering. Finally, we prove by induction that the mechanically generated algorithm is also correct for any ring size except when an impossibility result holds (that is, when the number of robots divides the ring size). Our method can be seen as a first step towards  ``correct by design'' actual robot protocol implementations. 
 
\section{Background}
In this section we present a formal model for a robot system evolving on a ring and definitions and notations for a reachability game.

\subsection{Robot Network model}
In the following we present the robots and system model using the formalism we proposed in \cite{BMPTT13c}. 
We consider a set of robots evolving on a ring.

\paragraph{Robot model}
 A robot behavior can be described by a finite automaton.
%
%
Each robot executes a three-phase cycle composed of \textit{Look}, \textit{Compute}, \textit{Move} phases. 
To start a cycle, a robot  takes a snapshot of its environment, which is
represented by a \textit{Look} transition. Then it computes its future
movement (\textit{Compute} transition). Finally the robot moves according to its previous computation, this effective movement is
represented by a \textit{Move} transition, going back to its initial state. On a ring there are only
three possibilities for the move: stay idle, move in the clockwise direction
or in the counterclockwise direction. Note also that Look and Compute states can be merged in a single state - LookCompute.

\paragraph{Scheduler model.}
 The three existing asynchrony models fully synchronous (FSYNC), semi-synchronous (SSYNC) and asynchronous (ASYNC) in robot networks are called schedulers. The scheduler can be modeled by a finite automaton. The synchronization of these
schedulers with robots automata is an automaton that represents
the global behavior of robots in the chosen model.   

In the sequel we denote by \textit{LookCompute$_i$} (respectively \textit{Move$_i$}), the
\textit{LookCompute} (resp. \textit{Move}) phase of i$^{th}$ robot. And for a subset
$\mathit{Sched}$ of robots, we denote by $\prod\limits_{i \in
\mathit{Sched}}LookCompute_i$ (resp. $\prod\limits_{i \in
\mathit{Sched}}\mathit{Move}_i$) the synchronization of all \textit{LookCompute$_i$}
(resp. \textit{Move$_i$}) actions of all robots in $\mathit{Sched}$.

In the SSYNC model, an arbitrary non-empty subset of robots is
scheduled for execution at every phase, and operations are executed
synchronously. In this case, the automaton consists of a cycle, where
a set "\textit{Sched}" is first chosen, then the \textit{LookCompute} and
\textit{Move} phases are synchronized for this set. A generic
automaton for SSYNC is described in Figure~\ref{fig:schedSYm}.
 
The FSYNC model is a particular case of the SSYNC model, where all
robots are scheduled for execution at every phase, and operate
synchronously thereafter. 

\begin{figure}[t] 
\begin{center} 
	\begin{tikzpicture}[thick, node distance= 10em,>=stealth,->, scale=0.75, transform shape]
	\node[ellipse, minimum height=3em,minimum width= 5em, draw](a){};
	\node[text width=4em, text centered] (a1){\textit{Move} Done}; 
	\node[ellipse, minimum height=3em,minimum width= 5em, draw, right=6 em  of a](b){};
	\node[text width=4em, text centered] (b1) at(b){\textit{Sched} chosen}; 
	\node[ellipse, minimum height=3em,minimum width= 8em, draw, right= 9em of b](c){};
	\node[text width=4em, text centered, right =10 em of b] (c1){\textit{LookCompute} Done}; 
	\node (init) [below left of=a, node distance= 5em] {};

	\path 	(a) edge[] node[above=1em]{$\mathit{Choose~Sched}$} (b) 
			(b) edge[] node[above]{${\prod\limits_{i \in \mathit{Sched}}} LookCompute_i $} (c) 
			(c) edge[bend left] node[below]{${\prod\limits_{i \in \mathit{Sched}}} \mathit{Move}_i $} (a)
			(init) edge[] node{} (a); 
	\end{tikzpicture} 
\caption{The Semi-Synchronous Schedulers automaton} 
\label{fig:schedSYm} 
\end{center} \end{figure}
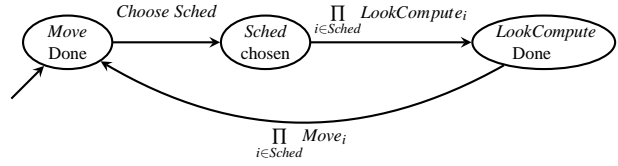 
 
\paragraph {System model.}
A configuration of $k$ robots on a ring of size $n$ encodes the position of the robots in the ring.
The system is modeled by the automaton obtained
by the synchronized product of $k$ robot automata and the possible
configurations.   The scheduler
is used to define the synchronization function.   The alphabet of
actions is $A = {\prod_{i}}\ \mathit{A}_i$, with $A_i
= \{LookCompute_i, Move_i,idle\}$ for each robot $i$. From this definition, states
are of the form $s = (s_1,\dots,s_k,c)$ where $s_i$ is the local state
of robot $i$, and $c$ the configuration. A transition of the system is labeled by a tuple $a=(a_1, \ldots, a_k)$, where
$a_i \in A_i$ for all $1 \leq i \leq k$ and $(s_1,\dots,s_k,c) \xrightarrow{a}
(s'_1,\dots,s'_k,c')$ \textit{iff} for all $i$, $s_i \xrightarrow{a_i} s_i'$ and
$c'$ is obtained from $c$ by updating the positions of all robots $i$ such that
$a_i = Move_i$. To represent the scheduling, we denote by ${\prod_{i \in
\mathit{Sched}}} \mathit{Act}_i$ the action $(a_1, \ldots, a_k)$ such that $a_i
= idle$ if $i \notin Sched$ and $a_i \in \{LookCompute_i, Move_i\}$ otherwise.

\subsection{Reachability Games}
\label{sec:jeux}
In the following we revisit the reachability games. We present here classical notions on this subject.
For more details, the interested reader can fruitfully consult the survey \cite{DBLP:conf/dagstuhl/Mazala01}.
If $A$ is a set of symbols, $A^*$ is the set of finite sequences of elements of $A$ (also called \emph{words}), and $A^\omega$
the set of infinite such sequences, with $\varepsilon$ the empty sequence. We note $A^+=A^*\setminus \{\varepsilon\}$,
and $A^\infty=A^*\cup A^\omega$. For a sequence $w\in A^\infty$, we denote its \emph{length} by $|w|$. If $w\in A^*$, $|w|$ is
equal to its number of elements. If $w\in A^\omega$, $|w|=\infty$.
For all words $w=a_1\cdots a_k\in A^*$, $w'=a'_1\cdots\in A^\infty$, we define the \emph{concatenation} of $w$ and $w'$ by the
word noted $w\cdot w'=a_1\cdots a_ka'_1\cdots$. We sometimes omit the symbol and simply write $ww'$. 
If $L\subseteq A^*$ and $L'\subseteq A^\infty$, we define $L\cdot L'=\{w\cdot w'\mid w\in L, w'\in L'\}$.

A game is composed of an \textit{arena} and \textit{winning conditions}.

\paragraph{Arena} An arena is a graph $\arena=(V,E)$ in which the set of vertices
$V=V_p\uplus V_o$ is partitioned into $V_p$, the vertices of the protagonist, and $V_o$
the vertices of the opponent. The set of edges $E\subseteq V\times V$ allows to
define the set of successors of some given vertex $v$, noted $vE=\{v'\in V\mid
(v,v')\in E\}$. 
In the following, we will only consider finite arenas.

\paragraph{Plays} To play on an arena, a token is positioned on an initial vertex. Then the token is moved by 
the players from one vertex to 
one of its successors. Each player can move the token only if it is on one of her own vertices. Formally, a play is
a path in the graph, i.e., a finite or infinite sequence of vertices $\pi = v_0v_1\dots \in V^\infty$,
where for all $0< i < |\pi|$,  $v_{i} \in v_{i-1}E$. Moreover, a play is finite only if the token
has been taken to a position without any successor (where it is impossible to continue the game): 
if $\pi$ is finite with $|\pi|=n$, then $v_{n-1}E=\emptyset$.

\paragraph{Strategies}
A strategy for the protagonist determines
to which position she will bring the token whenever it is her turn to play.
To do so, the player takes into account the history of the play, and the current
vertex.
Formally, a strategy for the protagonist is a (partial) function
$\sigma : V^*\cdot V_p \rightarrow V$ such that, for all sequence (representing the
current history) $w\in V^*$, all $v\in V_p$, $\sigma
(w\cdot v)\in vE$ (i.e. the move is possible with respect to the arena). 
A strategy $\sigma$ is \emph{memoryless} if it does not depend on the history. Formally, it means
that for all
$w, w'\in V^*$, for all $v\in V_p$, $\sigma(w\cdot v) = \sigma(w'\cdot v)$. In that case,
we may simply see the strategy as a function $\sigma:V_p\rightarrow V$.

Given a strategy $\sigma$ for the protagonist, a play $\pi=v_0v_1\cdots \in V^\infty$
 is said to be \emph{$\sigma$-consistent} if for all $0<i<|\pi|$, if $v_{i-1}\in V_p$, then $v_i=\sigma
(v_0\cdots v_{i-1})$. Given an initial
vertex $v_0$, the \emph{outcome} of a strategy $\sigma$ is the set of plays starting in $v_0$
that are $\sigma$-consistent. Formally, given an arena $A=(V,E)$, an intial vertex $v_0$ and 
a strategy $\sigma:V^*V_p\rightarrow V$, we let $\Outcome(A,v_0,\sigma)=\{v_0\pi \in V^\infty\mid v_0\pi
\textrm{ is a play and is $\sigma$-consistent}\}$.

\paragraph{Winning conditions, winning plays, winning strategies}
 We define the \emph{winning condition} for the protagonist as a subset of the plays $\Win \subseteq V^\infty$.
 Then, a play $\pi$ is \emph{winning} for the protagonist if $\pi\in \Win$. In this work, we focus on the simple
 case of reachability games: the winning condition is then expressed according to a subset of vertices $T\subseteq V$
 by $\Reach(T)=\{\pi=v_0v_1\cdots \in V^\infty\mid \exists 0\leq i < |\pi|: v_i \in T\}$. This means that the protagonist
 wins a play whenever the token is brought on a vertex belonging to the set $T$. Once it has happened, the play
 is winning, regardless of the following actions of the players.

Given an arena $\arena=(V,E)$, an initial vertex $v_0\in V$ and a
winning condition $\Win$, a
\emph{winning strategy} $\sigma$ for the protagonist is a strategy such that
any $\sigma$-consistent play is winning. In other words, a strategy $\sigma$ is winning if
$\Outcome(\arena,v_0, \sigma) \subseteq Win$. The protagonist wins the game $(\arena,v_0,\Win)$
if she has a winning strategy for $(\arena,v_0,\Win)$.
We say that $\sigma$ is winning on a subset $U \subseteq V$ if it is winning
starting from any vertex in $U$:
if $\Outcome(\arena,v_0,\sigma)\subseteq \Win$ for all $v_0\in U$.  A subset
$U\subseteq V$ of the vertices is \emph{winning} if there exists
a strategy $\sigma$ that is winning on $U$.

\paragraph{Solving a reachability game} Given an arena $\arena=(V,E)$,
a subset $T\subseteq V$, one wants to determine the set $U\subseteq V$
of winning positions for the protagonist, and a strategy $\sigma:V^*V_p\rightarrow V$
for the protagonist, that is winning on $U$ for $\Reach(T)$.

Figure~\ref{fig:synth} represents a reachability 2-player game. We recall now a well-known result on reachability games:

\begin{theorem} \label{th:memoryless}The set of winning positions for the protagonist in a reachability
game can be computed in linear time in the size of the arena. Moreover, from any position, the protagonist
has a winning strategy if and only if she has a memoryless winning strategy.
\end{theorem}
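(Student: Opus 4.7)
The plan is to prove both parts simultaneously by an attractor computation. Define inductively a sequence of subsets $W_0 \subseteq W_1 \subseteq \cdots \subseteq V$ by letting $W_0 = T$ and
\[
W_{i+1} = W_i \cup \{v \in V_p \mid vE \cap W_i \neq \emptyset\} \cup \{v \in V_o \mid vE \neq \emptyset \text{ and } vE \subseteq W_i\}.
\]
Since $V$ is finite, the sequence stabilizes at some index, and I would set $W = \bigcup_i W_i$. For every vertex $v \in W \setminus T$, let $\mathrm{rk}(v)$ be the least index $i$ such that $v \in W_i$; for every $v \in V_p \cap W \setminus T$, pick (arbitrarily) a successor $\sigma(v) \in vE \cap W_{\mathrm{rk}(v)-1}$. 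This yields a memoryless strategy $\sigma : V_p \to V$ on the protagonist's vertices in $W$.

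The first key step is to verify that $\sigma$ is winning on $W$. Starting from any $v \in W$, any $\sigma$-consistent play remains in $W$: at a protagonist vertex $\sigma$ moves to $W$ by construction, and at an opponent vertex $v \in V_o \cap W \setminus T$ every successor lies in $W_{\mathrm{rk}(v)-1} \subseteq W$. Moreover, the rank strictly decreases at every protagonist move, and is non-increasing at opponent moves; hence after finitely many steps the play reaches $W_0 = T$, which shows $\mathrm{Outcome}(\arena,v,\sigma) \subseteq \Reach(T)$. Note that plays cannot get stuck before reaching $T$, because opponent vertices admitted into $W$ are required to have at least one successor. Thus every vertex of $W$ is winning, and a memoryless winning strategy exists from any of them.

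The second step is the converse: from any $v \notin W$, the opponent can avoid $T$ forever. By the definition of $W$, every $v \in V_o \setminus W$ either has $vE = \emptyset$ (so the play stops at $v \notin T$) or has some successor outside $W$; likewise every $v \in V_p \setminus W$ has $vE \subseteq V \setminus W$ (or $vE = \emptyset$, in which case the play halts outside $T$). A memoryless counter-strategy for the opponent, picking any successor outside $W$, therefore keeps the play in $V \setminus W$ forever or ends it outside $T$. This shows $V \setminus W$ is losing and, combined with the previous paragraph, yields the equivalence between having a winning strategy and having a memoryless one.

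Finally, for the complexity claim, the set $W$ can be computed in time $O(|V| + |E|)$ by a standard backward propagation: maintain for each opponent vertex a counter initialized to $|vE|$, push the vertices of $T$ on a worklist, and whenever a vertex $u$ is extracted, scan its incoming edges $(v,u)$, adding $v$ to $W$ either immediately (if $v \in V_p$) or once its counter reaches zero (if $v \in V_o$). Each edge is then examined a constant number of times, giving linear complexity; the memoryless strategy $\sigma$ can be recorded on the fly when a protagonist vertex is first added. The delicate point in writing this proof cleanly is the careful handling of dead-end vertices (empty $vE$) so that the equivalence between $W$ and the winning region holds exactly, and ensuring that the counters are updated only once per edge to preserve the linear bound.
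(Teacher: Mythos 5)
Your proposal is the standard backward attractor construction, and it is essentially correct. Note that the paper does not actually prove this theorem: it is stated as a recalled, well-known fact with a pointer to the survey of Mazala, so there is no ``paper proof'' to diverge from --- what you wrote is precisely the textbook argument that the citation stands in for, including the correct treatment of dead-end vertices under this paper's convention that a play is losing if it halts outside $T$, and the linear-time implementation with per-vertex successor counters. One small point to tighten: you justify termination by saying the rank strictly decreases at protagonist moves and is merely \emph{non-increasing} at opponent moves; as literally stated that does not yield termination in a general arena, where opponent-to-opponent edges may exist and a play could in principle linger at constant rank. In fact your own definition gives more: if $v\in V_o$ has rank $i+1$ then $vE\subseteq W_i$, so \emph{every} successor has rank at most $i$, i.e.\ the rank strictly decreases at opponent moves as well. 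Replacing ``non-increasing'' by this observation makes the ``hence after finitely many steps the play reaches $T$'' inference airtight; everything else, including the converse direction showing $V\setminus W$ is losing via a memoryless spoiling strategy for the opponent, is fine.
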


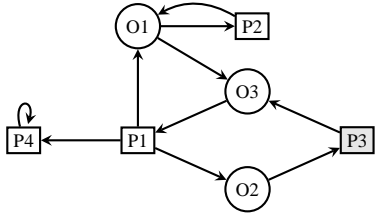
\begin{figure}[t]
\centering 
\begin{tikzpicture} [thick, node distance= 6em,>=stealth,->, scale=0.8, transform shape] 
\node[rectangle, draw, text centered](P1) {P1};
\node[rectangle, draw, text centered, left of =P1](P4){P4};
\node[circle,draw, text centered, above of = P1] (E1) []{O1};
\node[rectangle, draw, text centered, right of = E1](P2) {P2};
\node[circle,draw, text centered, above right= 1em and 4em of P1] (E3) []{O3};
\node[circle,draw, text centered, below right= 1em and 4em of P1] (E2) []{O2};
\node[rectangle, draw, text centered, above right = 1em and 4em of E2, fill=gray!20](P3) {P3};

\path (P1) edge[] (E1) 
         (P1) edge[] (P4)
	(E1) edge[] (P2) 
	(P2) edge[bend right] (E1) 
	(P1) edge[] (E2)
	(E1) edge [] (E3)
	(E2) edge[] node{} (P3)
	(P3) edge[] (E3)
	(E3) edge[] (P1)
	(P4) edge[loop above] (P4);

\end{tikzpicture} \caption{A two-player game. In this figure protagonist vertices are
represented by rectangles and antagonist vertices by circles. 
The winning condition is $\Reach(\{P3\})$. Any path in the graph is a play.
From P2 the protagonist has no winning strategy. 
From P1 a (memoryless) winning strategy is to go to O2.
Winning positions are $\{P1,P3\}$.} 
\label{fig:synth}
\end{figure}


\section{Encoding the gathering problem into a game}\label{sec:game}
As we have claimed in the introduction, the gathering problem for synchronous robots is actually a game between the robots,
that have an objective (winning condition) and evolve on a graph encoding the different configurations, and an opponent that can decide
the actual movement of a disoriented robot, i.e. a robot whose observation of the ring is symmetrical, hence is unable to distinguish its
two sides from one another. It may seem at first
that the model actually needed is the one of \emph{distributed games}, in which each robot represents a distinct player, all of them cooperating against a hostile environment. In distributed games, existence of a winning strategy
for the team of players is undecidable \cite{PetersonReif79}. However, the fact that the system is synchronous or semi-synchronous,
and that the robots are able to sense their global environment, and thus to always know the global state of the system, allows us to
stay in the framework of 2-player games, and to encode the set of robots as a single player. 
Of course, the strategy obtained will
be centralized, but we will design the game in order to obtain only strategies that can be distributed amongst anonymous, memoryless robots without chirality.
In the rest of the paper, we focus on the synchronous semantics for the system. With minor modifications, the
game can be modified to handle the semi-synchronous semantics.

\subsection{Encoding robots configurations: symmetries and equivalences}
Consider a robot system consisting of $k$ robots and $n$ nodes ($k<n$).
 The configuration of such a system is represented by the tuple $(d_1,\cdots, d_k)$, such that $\Sigma_{i=1}^k d_i=n-k$, 
and $d_i\in \{-1,0,\cdots, n-1\}$. 
Each value $d_i$ represents the number of free nodes between the $i^{\textrm{th}}$ robot and
the next robot in the clockwise direction. When the two robots occupy adjacent nodes, $d_i=0$, and when 
these two robots occupy the same node, $d_i=-1$. 
Let $\Config = \{(d_1, \cdots, d_k)\mid
\Sigma_{i=1}^k d_i=n-k\textrm{ and }d_i\in \{-1,0,\cdots,n-1\}\}$ the set of all configurations (note that $|\Config| = C^{n}_{n+k-1}$). 
In a configuration, each robot can observe the entire ring, centered in
its own position. Since the robots have no chirality,  given a configuration $C=(d_1,\cdots, d_k)$, the 
\emph{observation of robot $i$} is  $\obs_i(C)=\{(d_i, d_{i+1}, \cdots d_k, d_1, 
\cdots d_{i-1}), (d_{i-1},\cdots, d_1, d_k, \cdots d_i)\}$. Let $Obs = \{\obs_i(C)\mid C\in\Config, 1\leq i\leq k\}$ be 
the set of all possible observations.

Several types of configurations can be distinguished (see Figure~\ref{fig:views}): 
\textit{periodic}: if there are several axis of symmetry,
\textit{symmetric}: if there is only one axis of symmetry (edge-edge, node-edge, node-node),
\textit{rigid configurations}: all other configurations.

A configuration is called \emph{tower configuration} if there are several robots on the same node.  Robots constituting this tower are the ones such that at least one tuple of their 
observation begins with $-1$.

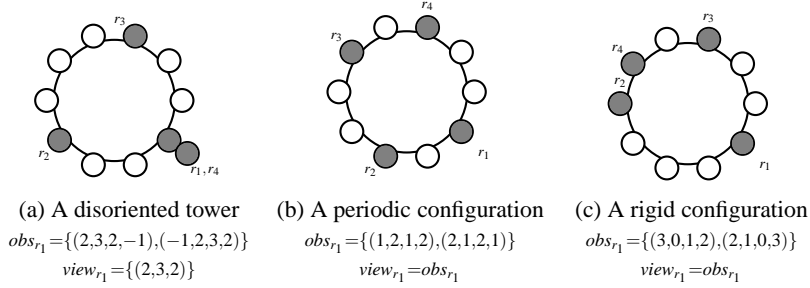
\begin{figure}[t]
\centering
\begin{subfigure}[b]{0.30\textwidth}
\captionsetup{justification=centering}
\centering
\begin{tikzpicture}[node distance =10em,>=stealth,->, scale=0.6, transform shape, thick]
  \node[circle,minimum size =8.5em, draw, thick] (C0){};

  \foreach \i in {0,...,9}
    \node[circle,minimum size =1.6em,draw,fill=white,thick]at(\i*36:1.5) (\i){};

\node[circle,minimum size =1.5em,fill=gray!100]at(2*36:1.5) (4){};
\draw node[above left of=4, node distance=1.5em] {$r_3$};
\node[circle,minimum size =1.5em,fill=gray!100]at(6*36:1.5) (6){};
\draw node[below left of=6, node distance=1.5em] {$r_2$};
\node[circle,minimum size =1.5em,fill=gray!100]at(9*36:1.5) (9){};
\node[circle,minimum size =1.6em,draw,fill=white,thick]at(9*36:2) (10){};
\node[circle,minimum size =1.5em,fill=gray!100]at(9*36:2) (10){};
\draw node[below right of=10, node distance=1.75em] {$r_1,r_4$};

\end{tikzpicture}
\caption{\small {A disoriented tower \\ ${\scriptstyle obs_{r_1}=\{(2, 3, 2, -1),(-1, 2, 3, 2)\}}$ \\ ${\scriptstyle\view_{r_1} = \{(2,3,2)\}}$}}
\label{fig:views1}
\end{subfigure}
\begin{subfigure}[b]{0.30\textwidth}
\captionsetup{justification=centering}
\centering
\begin{tikzpicture}[node distance =10em,>=stealth,->, scale=0.6, transform shape, thick]
  \node[circle,minimum size =8.5em, draw, thick] (C0){};

  \foreach \i in {0,...,9}
    \node[circle,minimum size =1.6em,draw,fill=white,thick]at(\i*36:1.5) (\i){};

\node[circle,minimum size =1.5em,fill=gray!100]at(4*36:1.5) (4){};
\node[circle,minimum size =1.5em,fill=gray!100]at(2*36:1.5) (2){};
\node[circle,minimum size =1.5em,fill=gray!100]at(7*36:1.5) (7){};
\draw node[above left of=4, node distance=1.5em] {$r_3$};
\draw node[above of=2, node distance=1.5em] {$r_4$};
\draw node[below left of=7, node distance=1.5em] {$r_2$};
\node[circle,minimum size =1.5em,fill=gray!100]at(9*36:1.5) (9){};
\draw node[below right of=9, node distance=2.25em] {$r_1$};

\end{tikzpicture}
\caption{A periodic configuration \\${\scriptstyle obs_{r_1}=\{(1,2,1,2),(2,1,2,1)\}}$\\ ${\scriptstyle \view_{r_1} = obs_{r_1}}$}
\label{fig:views2}
  \end{subfigure}
\begin{subfigure}[b]{0.30\textwidth}
\captionsetup{justification=centering}
\centering
\begin{tikzpicture}[node distance =10em,>=stealth,->, scale=0.6, transform shape, thick]
  \node[circle,minimum size =8.5em, draw, thick] (C0){};

  \foreach \i in {0,...,9}
    \node[circle,minimum size =1.6em,draw,fill=white,thick]at(\i*36:1.5) (\i){};

\node[circle,minimum size =1.5em,fill=gray!100]at(5*36:1.5) (5){};
\node[circle,minimum size =1.5em,fill=gray!100]at(4*36:1.5) (4){};
\node[circle,minimum size =1.5em,fill=gray!100]at(2*36:1.5) (2){};
\draw node[above left of=4, node distance=1.5em] {$r_4$};
\draw node[above of=2, node distance=1.5em] {$r_3$};
\draw node[above of=5, node distance=1.5em] {$r_2$};
\node[circle,minimum size =1.5em,fill=gray!100]at(9*36:1.5) (9){};
\draw node[below right of=9, node distance=2.25em] {$r_1$};

\end{tikzpicture}
\caption{A rigid configuration \\${\scriptstyle obs_{r_1}=\{(3,0,1,2),(2,1,0,3)\}}$\\ ${\scriptstyle \view_{r_1} = obs_{r_1}}$}
\label{fig:views3}
  \end{subfigure}
  \caption{Robot observations and Views}
  \label{fig:views}
\end{figure}

Since the robots take snapshots of the configuration, and their decisions are based on this information,
the states of the arena must represent the different configurations of the ring.
The robots are anonymous, hence, different rotations of a similar ring in fact represent the same configuration.
We define the rotation relation $\circlearrowright \subseteq \Config\times \Config$ as follows: for all configurations $C$, $C'\in \Config$,
$C\circlearrowright C'$ if and only if $C=(d_i,d_{i+1}, \cdots, d_{i+k-1})$ and $C'=(d_{i+1}, d_{i+2}, \cdots d_{i+k})$, where the addition symbol $+$
means sum modulo $k$.
Since the robots have no chirality, one can easily observe that, for two configurations $C$ and $C'$, if $C=(d_1,\cdots, d_k)$ and
$C'=(d_k, \cdots, d_1)$, then, for all robot $i$, $\obs_i(C)=\obs_i(C')$. We then define the mirror relation $\sim\subseteq \Config\times\Config$
by $C\sim C'$ if and only if $C=(d_1,\cdots, d_k)$ and
$C'=(d_k, \cdots, d_1)$.
From these two relations, we define an equivalence relation ${\equiv}\subseteq \Config\times\Config$ on the configurations, that identify all the configurations
on which the robots should behave the same way: we let ${\equiv} \stackrel{\mathrm{def}}{=} {(\circlearrowright\cup \sim)^*}$.

The following lemma states that our equivalence relation is correct with respect to robots behavior.
\begin{lemma}\label{lem:equivalence-classes}
For all $C\in \Config$, $\bigcup_{1\leq i\leq k} \obs_i(C) = \equivclass{C}$.
\end{lemma}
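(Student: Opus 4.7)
My plan is to identify both sides of the claimed equality with the orbit of $C$ under the natural action of the dihedral group on $k$-tuples. To this end, introduce the maps $r, m : \Config \to \Config$ defined by $r(d_1, \ldots, d_k) = (d_2, \ldots, d_k, d_1)$ and $m(d_1, \ldots, d_k) = (d_k, \ldots, d_1)$. Then $C \circlearrowright C'$ amounts to $C' = r(C)$ (up to transitive closure), and $C \sim C'$ to $C' = m(C)$. These two maps satisfy the standard dihedral relations $r^k = \mathrm{id}$, $m^2 = \mathrm{id}$, and the key commutation identity $m \circ r = r^{-1} \circ m$.

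\textbf{Step 1.} First I would establish that $\bigcup_{1\le i\le k} \obs_i(C)$ coincides with the set $O(C) = \{r^j(C)\mid 0 \le j < k\} \cup \{r^j(m(C))\mid 0 \le j < k\}$. By direct inspection, the first tuple in $\obs_i(C)$ is $r^{i-1}(C)$. For the second tuple $(d_{i-1}, \ldots, d_1, d_k, \ldots, d_i)$, a routine index computation shows it equals $r^{k-i+1}(m(C))$ (writing $m(C) = (e_1, \ldots, e_k)$ with $e_\ell = d_{k-\ell+1}$ and matching entries). As $i$ ranges over $\{1, \ldots, k\}$, both exponents sweep a complete residue system modulo $k$, so the union exhausts $O(C)$.

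\textbf{Step 2.} Next I would show $O(C) = \equivclass{C}$. Inclusion $O(C) \subseteq \equivclass{C}$ is immediate: each $r^j(C)$ is obtained by $j$ applications of $\circlearrowright$, and each $r^j(m(C))$ by one $\sim$ followed by $j$ rotations. For the converse, I would verify that $O(C)$ is closed under both generators, using $m \circ r^j = r^{-j}\circ m$ to handle the mirror case: $m(r^j(C)) = r^{-j}(m(C)) \in O(C)$ and $m(r^j(m(C))) = r^{-j}(C) \in O(C)$. Closure under $r$ is trivial. Since $O(C)$ contains $C = r^0(C)$ and is stable under both generators, it contains the whole equivalence class.

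Combining the two steps yields the lemma. The only delicate point is the bookkeeping in Step~1 identifying the reverse-order tuple with the correct power of $r$ applied to $m(C)$; this is a purely syntactic verification. Conceptually, the statement simply expresses the intuitive fact that a robot's two possible local readings (clockwise and counterclockwise) combined with the free choice of viewpoint among the $k$ robots enumerate exactly the dihedral orbit of $C$.
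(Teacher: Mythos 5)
Your proof is correct. Note that the paper states Lemma~\ref{lem:equivalence-classes} without any proof, so there is nothing to compare against; your dihedral-orbit argument is the natural way to fill that gap. Both halves check out: the index computation identifying the second tuple of $\obs_i(C)$ with $r^{k-i+1}(m(C))$ is right (with $e_\ell = d_{k-\ell+1}$ one gets $e_{k-i+2}=d_{i-1}$ through $e_{k-i+1}=d_i$ in the required order), and the closure argument via $m\circ r^j = r^{-j}\circ m$ correctly shows that $O(C)$ is exactly the set reachable under $(\circlearrowright\cup\sim)^*$, since it contains $C$ and is stable under both generators while every element of $O(C)$ is reachable from $C$.
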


Then, an equivalence class of configurations can be seen as the set of observations for the robots in such a configuration.

We let $\equivclass{C}$ be the equivalence class of a configuration $C\in\Config$, and
we define an application $\rep:{\Config/\equiv} \rightarrow \Config$, such that 
 $\rep(\equivclass{C})\in \equivclass{C}$ for all $C\in\Config$, that associates to each 
 equivalence class a unique representative in this class, say the smallest w.r.t lexicographic order on tuples. 
For the rest of the paper, when we use the sum symbol on indexes of elements of a configuration, it means sum modulo $k$.

\subsection{Encoding the moves of the robots and transitions between configurations}\label{subsec:moves}
To define precisely the transitions between the configurations, we need the following auxiliary notations.
 We  let $\Act=\{\clockwise,\counterclockwise, \still\}$ be the different possible moves for a robot, where, as one easily 
 guesses, $\clockwise$ means that the robot moves in the clockwise direction, $\counterclockwise$ means that it moves in 
 the counter-clockwise direction, and $\still$ means that the robot does not move.
 We will use the fact that,
 for robots on a tower, a deterministic algorithm will either make them all move, or none of them. However, if they are disoriented, they
 can move in different directions.
 When a robot $i$ moves, it modifies the distances $d_i$ and $d_{i-1}$ (increasing one of these two distances by one, and
 decreasing by one the other). We can encode this by an algebraic notations, adding the configuration and one
 vector of movement for each robot: the effect on the configuration of the move $\counterclockwise$ of robot $i$ will be represented 
by the $k$-tuple $m^{i,\counterclockwise}$, the effect of the move $\clockwise$ will be represented by $m^{i,\clockwise}$ 
and if the robot does not move, it will be represented by $m^{0}$. 
These tuples are defined as follows: for a robot $1\leq i\leq k$,
$m^{i,\counterclockwise}_i=1$ and $m^{i,\counterclockwise}_{i-1}=-1$ and $m^{i,\counterclockwise}_j=0$ for all other $1\leq j\leq k$. Similarly,
$m^{i,\clockwise}_i=-1$ and $m^{i,\clockwise}_{i-1}=1$ and $m^{i,\clockwise}_j=0$ for all other $1\leq j\leq k$. The last tuple is $m^{0}_j=0$ for all 
$1\leq j\leq k$.

The idea is to add (in an element-by-element fashion) the current configuration to all the tuples representing the movements of the robots
to obtain the next configuration.
However, when the movements of two adjacent robots imply that they switch their positions
 in the ring, some absurd values (-2 or -3) may appear in the obtained configuration, if the sum is naively effected, so a careful
 treatment of these particular cases must be done. To obtain the correct configuration, one should recall
 that robots are anonymous, hence if two robots switch their positions, it has the same effect as if none of them has moved. Also,
 if in a tower, some robots want to move clockwise, and the others want to move counterclockwise, the exact robots that will move are of no importance:
 the only important thing is the number of robots that move. We will then reorganize the movements between the robots, in order
 to keep correct values in our configurations: in a tower, we will assume that the robots that will move in the counterclockwise direction
 will always be the bottom ones, and when a robot moves right and joins a tower, we will assume that it will be placed at the bottom of the tower,  and when it moves left and joins a tower, it will be placed at the top of the tower. These conventions will ensure that when adding the configuration  and the different movements, we will not obtain aberrant values.
 
 Formally, given a configuration $C=(d_1,\dots,d_k)$, we define $\PosTower(C)=\{(i,j)\mid d_j\neq -1\textrm{ and }\forall i\leq \ell< j, d_\ell=-1\}$
 that contains the positions of the towers, encoded by the position of the first and the last robot in it. We then define
$\Pos(C)=\PosTower(C) 
 \cup\{(i,i)\mid 1\leq i\leq k, \forall 1\leq \ell\leq k, (i,\ell),(\ell,i)\notin\PosTower(C)\}$, that contains the positions of the towers, and the positions of the isolated robots.
 Given a tuple of movements $(m_i)_{1\leq i\leq k}$, given $(i,j)\in\Pos(C)$, $N^\clockwise_{(i,j)}=|\{m^\clockwise_\ell\mid i\leq \ell\leq j\}|$ and 
 $N^\counterclockwise_{(i,j)}=|\{m^\counterclockwise_\ell\mid i\leq \ell\leq j\}|$. We first reorganize the movements of the robots in the towers: 
 for all $(i,j)\in\PosTower(C)$, we let $m'_\ell=m^{\ell,\counterclockwise}$ for all $i\leq \ell \leq \bigl(N_{(i,j)}^\counterclockwise +i-1\bigr)$ and 
 $m'_\ell = m^{\ell,\clockwise}$ for all $\bigl(N_{(i,j)}^\counterclockwise +i\bigr)\leq \ell \leq j$. For all $(i,i)\in\Pos(C)\setminus\PosTower(C)$, $m'_i=m_i$. Now, we iteratively
 modify the tuple $m'$. Let $(i,j)\in\Pos(C)$ be the element of $\Pos(C)$ considered at the $t^{\textrm{th}}$ iteration and let $m^t$ be the current tuple encoding the moves. 
 \begin{itemize}
\item If $d_j\neq 0$, $m^{t+1}=m^t$. 
\item Otherwise, let $r$ such that $(j+1,r)\in\Pos(C)$ (if $r=j+1$, the next robot is isolated, otherwise it is a tower). 
 \begin{itemize}
 \item If $N^{\clockwise}_{(i,j)}\geq N^{\counterclockwise}_{(j+1,r)}$, then $m^{t+1}_\ell = m^{\ell, \clockwise}$ for all 
 $j-N^\clockwise_{(i,j)}+N^\counterclockwise_{(j+1,r)}+1\leq \ell\leq j$, $m^{t+1}_\ell=m^{\ell,0}$ for all $j-N^\clockwise_{(i,j)}\leq \ell\leq j-N^\clockwise_{(i,j)}+
 N^\counterclockwise_{(j+1,r)}$ and for all $j+1 \leq \ell \leq j+N^\counterclockwise_{(j+1,r)}-1$, and $m^{t+1}_\ell=m^t_\ell$ for all other $\ell$. 
 \item If $N^{\clockwise}_{(i,j)}< N^{\counterclockwise}_{(j+1,r)}$, then the modification is symmetrical. 
 \end{itemize}
 \end{itemize}
 When all the elements of $\Pos(C)$ have been visited, we obtain a tuple $(m^f_i)_{1\leq i\leq k}$.
 
 \begin{proposition}
For all configurations $C\in\Config$, for all tuples $(m_i)_{1\leq i\leq k}$, $C+\Sum{i=1}{k} m^f_i\in\Config$,
where $(m^f_i)_{1\leq i\leq k}$ has been obtained as described above.
 \end{proposition}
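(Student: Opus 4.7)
The plan is to verify two invariants for $C' = C + \Sum{i=1}{k} m^f_i$: first that $\sum_j C'_j = n-k$, and second that every coordinate $C'_j$ lies in $\{-1, 0, \dots, n-1\}$. The sum-preservation invariant is immediate, since each of the building blocks $m^{i,\clockwise}$, $m^{i,\counterclockwise}$, and $m^0$ has coordinate-sum zero, so the total $n-k$ of $C$ is unchanged by adding the $m^f_i$. For the coordinate bounds, the upper inequality $C'_j \leq n-1$ will follow from the lower inequality $C'_j \geq -1$ combined with sum-preservation: if every coordinate is at least $-1$ and they add to $n-k$, a single coordinate is at most $(n-k) - (-1)(k-1) = n-1$. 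Hence the real content is the lower bound.

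To prove $C'_j \geq -1$, I would observe that the only contribution to $C'_j - d_j$ is $(m^f_j)_j + (m^f_{j+1})_j$, a value in $\{-2,-1,0,1,2\}$, with the extremal $-2$ attained exactly when $m^f_j$ is of type $\clockwise$ while $m^f_{j+1}$ is of type $\counterclockwise$. I then split on $d_j$. When $d_j \geq 1$ the inequality is automatic. When $d_j = 0$, the index $j$ is the boundary between two consecutive elements $(i,j), (j+1,r) \in \Pos(C)$, and the inter-group iteration at $(i,j)$ is precisely designed to cancel $\min(N^\clockwise_{(i,j)}, N^\counterclockwise_{(j+1,r)})$ opposing movements by replacing them with $m^0$, thereby ruling out the $(\clockwise,\counterclockwise)$ collision at $j$. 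When $d_j = -1$, indices $j$ and $j+1$ lie inside a common tower $(i,j') \in \PosTower(C)$, and the intra-tower reorganization places counterclockwise movers at the bottom indices and clockwise movers above; hence $m'_j$ of type $\clockwise$ forces $m'_{j+1}$ not to be of type $\counterclockwise$, and a direct case check on the remaining $(\counterclockwise,\counterclockwise)$, $(\clockwise,\clockwise)$ and $(\counterclockwise,\clockwise)$ patterns gives a change in $\{0, +2\}$, keeping $C'_j \geq -1$.

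The main obstacle I expect is verifying that the successive iterations of the inter-group step do not interfere with one another or with the intra-tower invariant. The iteration treating the boundary at $j$ only rewrites entries with indices in $\{j - N^\clockwise_{(i,j)} + 1, \dots, j + N^\counterclockwise_{(j+1,r)} - 1\}$, which is contained in the two adjacent elements $(i,j)$ and $(j+1,r)$ of $\Pos(C)$. Since the elements of $\Pos(C)$ partition $\{1,\dots,k\}$, distinct boundaries act on disjoint index ranges, so iterations do not interact; and since the intra-tower reorganization is applied once before any inter-group step, later steps only touch the endpoints of towers and leave the interior counterclockwise/clockwise layout intact. Once this non-interference is established, the per-index case analysis above yields $C'_j \geq -1$ at every $j$, and the proposition follows.
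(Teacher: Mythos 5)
Your proof is correct and follows essentially the same route as the paper's sketch: a case analysis on $d_j\in\{-1,0,\geq 1\}$ showing that the reorganization rules out the $(\clockwise,\counterclockwise)$ collision at any boundary where it could push a distance below $-1$. You additionally make explicit the sum-preservation invariant, the derivation of the upper bound from it, and the non-interference of the successive iterations, all of which the paper's proof leaves implicit.
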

 
 \begin{proof}[sketch]
 Let $C=(d_1,\dots, d_k)$. For all $1\leq i\leq k$, if $d_i=0$, then if the robot $i$ wants to move in the clockwise direction, and the robot
 $i+1$ wants to move in the counterclockwise direction, then by our construction, $m^f_i=m^{i,0}$ and $m^f_{i+1}=m^{i+1,0}$, and the resulting distance will
 stay 0. For all other decisions of the robots, the distance obtained will be positive. If $d_i=-1$, by the reorganization of the robots on a tower, it is impossible
 that robot $i$ wants to move in the clockwise direction and that the robot $i+1$ wants to move in the counterclockwise direction. Hence, the distance obtained
 is never less than -1. In all other cases, the obtained distance is necessarily positive.
 \qed\end{proof} 
 
 \begin{definition}[successor of a configuration]
 Given a configuration $C\in\Config$ and a tuple of moves for the different robots $(m_i)_{i\in\{1,\dots,k\}}\in \Act^k$, the successor configuration, noted
 $C\oplus (m_i)_{i\in\{1,\dots,k\}}$ is obtained by $C+\Sum{i=1}{k} m^f_i\in\Config$, where $(m^f_i)_{1\leq i\leq k}$ has been obtained as described above.
 \end{definition}

\subsection{The Gathering Game}
We build an arena for a reachability game, such that the protagonist has a winning strategy if and only if one can design an algorithm for the robots to gather on a single node, starting from any configuration. The possible decisions of movements taken by the robots will be noted by $\Actions=\{
\clockwise,\counterclockwise,\still,\disoriented\}$, which is the set $\Act$ of possible movements, added by a special decision \disoriented, taken by a disoriented robot that
nevertheless wants to move. We will note $\overline{\clockwise}=\counterclockwise$, $\overline{\counterclockwise}=\clockwise$, $\overline{\still}=\still$
and $\overline{\disoriented}=\disoriented$.
We consider the arena $\Agather=(V_p\uplus V_o,E)$, where the set of protagonist states is
$V_p= (\Config/\equiv)$,
the set of antagonist states is $V_o= \Config \times (\Actions^k)$, the size of the arena is thus linear in $n$ and exponential in $k$.

The edge relation $E$ will ensure a strict alternance between the two players: $E\subseteq (V_p\times V_o) \cup (V_o\times V_p)$ and
will be detailed in the rest of the subsection.
\medskip

\textbf{From $V_p$ to $V_o$}
From a protagonist position, representing an equivalence class of configurations,
the play continues on an antagonist position memorizing the different movements
decided by each robot. Such a move is possible if, in a given equivalence class
of configurations, the robots with the same observation take the same decision. 
However,
our definition of observation does not capture what happens when several
robots are stacked to form a tower: 
consider two robots on a tower, in a configuration of the form
$C=(-1,d_2,\cdots,d_k)$. Using our definition of observation, we obtain
$\obs_1(C)=\{(-1,d_2,\cdots,d_k),(d_k,\cdots,d_2,-1)\}$ and $\obs_2(C)=\{(d_2,\cdots,d_k,-1),(-1,d_k,\cdots,d_2)\}$,
hence $\obs_1(C)\neq \obs_2(C)$ whereas in reality they observe the same thing.
Thus, we will use the notion of \emph{view} for a robot, where, if a robot is part of a tower, the distance
from other robots in the tower is removed from its observation. Formally, we define the \emph{view} of the robot $i$
as follows:

\begin{definition}[view]
Let $C\in\Config$ and $1\leq i\leq k$ be a robot. Let $(d_1,\dots, d_k)\in\obs_i(C)$ be
the smallest observation of $C$, with respect to the lexicographic order. We define the \emph{view of robot $i$} by
$\view_i(C)=\{(d_i, \dots, d_j), (d_j,\dots, d_i)\}$, where $i<j$ are respectively the smallest and greatest index such that $d_i\neq -1$ 
(respectively $d_j\neq -1$).

We let $\Views = \{\view_i(C)\mid C\in\Config, 1\leq i\leq k\}$
 be the set of all possible views.
\end{definition}

Note that if robot $i$ does not belong to a tower
then $\view_i(C) = \obs_i(C)$. Also, when $|\view_i(C)|=1$,
the robot is disoriented (see Figure~\ref{fig:views}).
For $o\in Obs$ an observation, we let $p(o)\in\Views$ be the projection
from an observation to obtain a view.

A \emph{decision function} is a function that suggests a  
movement to a robot, according to its view. 

\begin{definition}[decision function] 
A \emph{decision function} is a function $f:\Views\rightarrow \Actions$ such that, for all $V\in\Views$, if $|V|=1$, then
$f(V)\in\{\still,\disoriented\}$ and if $f(V)=\disoriented$ then $|V|=1$.
\end{definition}

Given a configuration $C=(d_1,\dots,d_k)\in\Config$, we translate a decision function $f$ into a real movement of each robot. For all $1\leq i\leq k$, let $f(C,i)$ be defined as follows. If $(d_i,\cdots, d_k,d_1,\cdots d_{i-1})$ is the smallest element of $\view_i(C)=\{(d_i,\cdots, d_k,d_1,\cdots d_{i-1}),$
$(d_{i-1},\cdots, d_1,d_k,\cdots d_i)\}$ in the
lexicographic order, then $f(C,i)=f(\view_i(C))$. Otherwise, $f(C,i)=\overline{f(\view_i(C))}$.
This is so because, when applying the real movements on a real configuration, the game (that makes
the robots move) must be coherent on a common direction.

We are able to determine now the edge relation from a protagonist state to an antagonist
state: for all $v\in V_p, v'\in V_o$, 
$(v,v')\in E$ { if and only if there exists a decision function $f$ such that }$v'=\bigl (C,(a_1,\dots,a_k)\bigr )$ defined as follows:
$C=\rep(v)=(d_1,\dots,d_k)$ and, for all $1\leq i\leq k$, $a_i=f(C,i)$.
\medskip

\textbf{From $V_o$ to $V_p$}
The moves of the antagonist lead the game into the following configuration of the system resulting of the application
of the decisions of all the robots. 
If one robot decides to move, but is disoriented, then the antagonist chooses the actual move (\clockwise{} or 
\counterclockwise) the robot will make. The next configuration reached by the robots is then determined
by the actions chosen and by the decisions taken by the antagonist.

\begin{definition}\label{def:v'-comp}
For a state $v'=(C, (a_1,\dots, a_k))$, we say that a tuple $(m_i)_{i\in\{1,\dots,k\}}$ is \emph{$v'$-compatible} if, 
\begin{itemize}
\item for all $1\leq i\leq k$ such that $a_i\neq\disoriented$, $m_i=a_i$,
\item for all $1\leq i\leq k$ such that $a_i=\disoriented$, $m_i\neq\still$.
\end{itemize}
\end{definition}

A $v'$-compatible tuple is then a tuple in which the antagonist has chosen in which directions disoriented robots 
will move.

Then, we can formally define the edge relation from an antagonist state to a protagonist state: for all $v\in V_p$, $v'=(C,(a_1,\dots,a_k))\in V_o$,
$(v',v)\in E$ if and only if there exists a $v'$-compatible tuple $(m_i)_{i\in\{1,\dots,k\}}$ such that $v=\equivclass{C\oplus (m_i)_{i\in\{1,\dots,k\}}}$.

To sum up, in $\Agather$\footnote{To handle the semi synchronous semantics, the antagonist should also choose at each step the subset of robots
that will be activated.}, 
\begin{align*}
E&=\{(v,v')\in V_p\times V_o\mid \\
&\textrm{there exists a decision function $f$ such that } v'=\bigl (\rep(v),(f(C,1),\dots,f(C,k))\bigr )\}\\
&\cup \{(v',v)\in V_o\times V_p\mid v'=(C,(a_i,\dots,a_k))\\
&\textrm{ and there exists a $v'$-compatible tuple $m(_i)_{i\in\{1,\dots,k\}}$},
v=\equivclass{C\oplus (m_i)_{i\in\{1,\dots,k\}}}\}.
\end{align*}

%
%
%

We now state the result that validates the construction: solving the reachability game that we have just defined
amounts to automatically synthesizing a deterministic algorithm achieving the gathering for this system.
Let $W=\equivclass{(-1,\cdots,-1, n-1)}\in V_p$ be the equivalence class of all the configurations representing the case where all the robots
are positioned on a single node.
\begin{theorem}\label{th:correctness}
The winning region for the game $(\Agather, W)$ corresponds exactly to the set of 
configurations from which the robots can achieve the gathering.
\end{theorem}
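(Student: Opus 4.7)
The plan is to establish the stated correspondence as two implications: if $\equivclass{C}$ lies in the winning region of $(\Agather, W)$, then a deterministic algorithm gathers the robots starting from $C$; and conversely, if such an algorithm exists, then $\equivclass{C}$ is winning.

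For the easier direction (algorithm implies winning), I would take a gathering algorithm, formalized as a decision function $f:\Views \to \Actions$ respecting the disorientation constraints, and build the memoryless protagonist strategy $\sigma_f(v) = \bigl(\rep(v), (f(\rep(v),1),\ldots,f(\rep(v),k))\bigr)$. This is a legal edge of $\Agather$ by the very construction of the transitions from $V_p$ to $V_o$. Then, by induction on the length of a play, every $\sigma_f$-consistent play corresponds exactly to a legitimate execution of $f$ against some adversarial sequence of chirality choices for disoriented robots, those choices being encoded by the antagonist's moves out of $V_o$ (cf. Definition~\ref{def:v'-comp}). Because $f$ gathers from $C$ against every such schedule, every $\sigma_f$-consistent play starting from $\equivclass{C}$ eventually visits the target $W$, so $\sigma_f$ is winning from $\equivclass{C}$.

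For the harder direction (winning implies algorithm), Theorem~\ref{th:memoryless} yields a memoryless winning strategy $\sigma: V_p \to V_o$. Each value $\sigma(v)$ determines, via the construction of $E$, a partial decision function $f_v$ specifying actions on the views that actually occur in $\rep(v)$. I would then aggregate the family $(f_v)_v$ into a single decision function $f:\Views \to \Actions$ and verify that the induced strategy $\sigma_f$ is still winning on the portion of the arena reachable from $\equivclass{C}$. Once this aggregation succeeds, the construction of the first direction converts $\sigma_f$ back into a genuine distributed algorithm that gathers from $C$, closing the argument.

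The main obstacle is precisely the aggregation step: nothing in the definition of $\sigma$ forces $f_v(V)$ and $f_{v'}(V)$ to coincide when the same view $V$ occurs in representatives of two distinct reachable classes $v$ and $v'$, yet an anonymous memoryless robot must react identically to identical views. The plan to resolve this is to fix, for every view $V$ appearing in some reachable vertex, a single action $a_V$ drawn from a winning choice at, say, the lexicographically smallest such vertex, and then to prove by induction on the attractor rank (distance to $W$) that the uniform strategy defined by the $a_V$'s remains in the protagonist's winning region. This step exploits the standard fact that, in a reachability game, the set of winning strategies is closed under local substitution of any winning edge by another winning edge; the non-trivial content is showing that the simultaneous substitutions dictated by the $a_V$'s do not collectively escape the attractor. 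Once this consistency is established, the two directions fit together and the theorem follows.
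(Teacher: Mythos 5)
Your two-direction structure and your treatment of the easy direction (algorithm $\Rightarrow$ winning strategy, via the induced memoryless strategy $\sigma_f$ and the play/execution correspondence) match the paper's argument. The problem is in the converse direction: the ``main obstacle'' you identify --- that the same view $V$ might occur in representatives of two distinct reachable classes $v$ and $v'$, forcing a consistency check when gluing the per-vertex decision functions $f_v$ into one global $f$ --- does not actually exist, and the machinery you build to address it is both unnecessary and, as written, not a proof. The key fact is Lemma~\ref{lem:equivalence-classes}: $\bigcup_{1\leq i\leq k}\obs_i(C)=\equivclass{C}$, so every observation is itself a configuration lying in exactly one equivalence class. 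A view determines its underlying observation up to the ambient class: it records the tower size of the observing robot (through the number of suppressed $-1$ entries), and all ways of re-inserting those $-1$ entries as a prefix/suffix yield cyclic rotations of one another, hence configurations in the same class. Consequently the sets of views arising in distinct classes are disjoint, the partial functions $f_v$ have pairwise disjoint domains, and they glue into a single decision function $f:\Views\rightarrow\Actions$ with no conflict to resolve. This is exactly why the paper invokes Lemma~\ref{lem:equivalence-classes} at this point: a robot's observation already identifies the equivalence class, so the centralized memoryless winning strategy from Theorem~\ref{th:memoryless} distributes directly.

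Beyond being superfluous, your proposed repair would be on shaky ground if the conflict were real. You appeal to ``closure of winning strategies under local substitution of one winning edge by another,'' but the action $a_V$ you import from the lexicographically smallest vertex where $V$ occurs is only known to be part of a winning move \emph{at that vertex}; at another vertex $v'$ whose representative also exhibits $V$, the edge it induces (in combination with the actions chosen for the other views of $\rep(v')$) need not stay in the winning region at all, let alone decrease the attractor rank. You flag this yourself (``the non-trivial content is showing that the simultaneous substitutions \dots do not collectively escape the attractor'') but leave it unproven, so as submitted the hard direction is incomplete. Replacing that entire paragraph with the disjointness observation above closes the argument and recovers the paper's proof.
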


\begin{proof}[Sketch]
An algorithm $\mathcal{F}$ can be turned into a decision function $f:\Views\rightarrow \Actions$ as follows: let $\{\mathit{view}_1,\mathit{view_2}\}\in\Views$, and assume that $\mathit{view}_1<\mathit{view}_2$ with $<$ being the lexicographic order. Let $o\in p^{-1}(\mathit{view}_1)$ be an observation compatible
with the view $\mathit{view}_1$ (we recall that $p$ is the projection of an observation for a robot in a tower to its view that removes the
elements equal to -1). Then $f(\{\mathit{view}_1,\mathit{view}_2\})=\mathcal{F}(o)$. Since the algorithm $\mathcal{F}$ takes the same decision for all the robots in a 
tower, hence for all $o\in p^{-1}(\mathit{view}_1)$, this definition indeed translates the algorithm into a decision function. The strategy that chooses this decision 
function will visit the same configurations as the algorithm on the real ring. Reciprocally, a winning strategy from a configuration class gives a decision function. To 
turn the decision functions for each configuration class into a distributed algorithm, we remark, thanks to Lemma~\ref{lem:equivalence-classes}, that one 
observation for a robot belongs to exactly one equivalence class of configurations. To determine the movement a robot takes according to its observation of the 
ring, it suffices to translate the decision function associated to the corresponding equivalence class into a movement in the ring. Then one can show that any 
sequence of configurations obtained by the algorithm corresponds to a play in the game, visiting the same configurations.
\qed\end{proof}


\section{Synthesis of 3-robots gathering protocol}
In the case of a system with three robots, there are $6$ distinct types of configuration classes:  
\begin{itemize}
	\item The 3-robots tower configuration, which is the configuration to reach: $\equivclass{(-1, -1, n-1)}$.
From this class of configuration  
the edge leads to $(C,(a_1, a_1,a_1)$ with $a_1 \in\{\still,\disoriented\}$. However, this edge is not of interest for us since the gathering property is verified.
	\item The disoriented tower is a configuration where there is an axis of symmetry passing through 
the tower and the isolated robot.
This configuration belongs to the class $\equivclass{(-1, \frac{n-1}{2}, \frac{n-1}{2})}$ and occurs only when $n$ is odd.  
In this case, all robots are disoriented and thus the outgoing edges lead to all the states $\{(-1, \frac{n-1}{2}, \frac{n-1}{2}), (a_1,a_1, a_2)\}$
with $a_1, a_2\in\{\still,\disoriented\}$.
	\item The tower configurations are the configurations of the classes $\equivclass{(-1, d_2, d_3)}$, 
with $rep(\equivclass{(-1, d_2, d_3)}) = (-1, d_2, d_3)$ and $-1<d_2< d_3 \in \N$.
The edges lead to all the states $\{(-1, d_2, d_3), (a_1,a_1,a_2\}$ with $a_1,a_2\in\{\counterclockwise,\clockwise,\still\}$.
	\item The symmetrical configurations, which is in $\equivclass{(d_1,d_1,d_2)}$ with $-1\neq d_1 \neq d_2$ and $-1 \neq d_{2}$.
 Recall that when $k$ is odd and there is an axis of symmetry, the axis goes through an occupied node.
If $d_1 < d_2$, the edges lead to $(C,(a_1,a_2, a_1)$ with $a_1\in\{\clockwise, \counterclockwise,\still\}$ and $a_2\in\{\still,\disoriented\}$, 
otherwise edges lead to $(C,(a_1,a_1, a_2)$ with $a_1\in\{\clockwise, \counterclockwise,\still\}$ and $a_2\in\{\still,\disoriented\}$.
	\item The rigid configurations are all other configurations. For a class $\mathbb{C}$ such that $\rep(\mathbb{C})=C$
does not fall into any of the above categories, 
the outgoing edges go to states $(C, (a_1,a_2,a_3))$ with $a_1,a_2,a_3\in \{\clockwise, \counterclockwise, \still\}$. 
\end{itemize}

We implemented the arena for three robots and different ring sizes, in the game-solver tool \textsc{Uppaal Tiga}~\cite{UPPAAL-TIGA}. 
We verified the impossibility of the gathering from periodic configurations.
Moreover we obtained that there is a winning strategy from all protagonist vertices except from the periodic configurations, 
and we identified in the edges relation that the edges that lead to $\{(C,(a,a,a))\}$ with $a \in \Act$ are not part of any winning strategy.

The arena without the periodic class of configuration $\{\equivclass{(d, d, d)}\}$, and the edges that lead to 
$\{(C,(a,a,a))\}$ with $a \in \Act$ from a protagonist vertex $\equivclass{C}$, is the graph such that all protagonist vertices are winning. 
In order to find the best winning strategies, weights are added on the edges. 
In order to minimize the number of robot moves, each edge is weighed by the number of robots that move.
A strategy is  a shortest path algorithm on this graph such that the protagonist vertices and opponent vertices
are handled differently.
The distance between a protagonist vertex and the configuration to reach is the minimum distance, and the distance between 
an opponent vertex and this configuration is the maximal distance between them.

We obtained all the optimal strategies, for each class of configurations 
$\equivclass{(d_1, d_2, d_3)}$, the edge relation is restricted.
From these strategies we outline the following pattern of strategy.
\begin{itemize}[parsep=0cm,itemsep=0cm,topsep=0cm] 
	\item If all robots form a tower nobody moves.
From $\equivclass{(-1, -1, n-1)}$ the edge relation leads to $((-1, -1, n-1),(\still, \still, \still))$.
	\item If $2$ robots form a tower the last robot takes the shortest path to the tower.
From $\equivclass{(-1, d_1, d_2)}$with $-1< d_1 < d_2$, the edge relation leads to 
$((-1, d_1, d_2),(\still, \still, \counterclockwise))$.
And from $\equivclass{(-1, \frac{n-1}{2}, \frac{n-1}{2})}$ the edge relation leads to 
$((-1,  \frac{n-1}{2}, \frac{n-1}{2}),(\still, \still, \disoriented))$.
	\item If the configuration is symmetrical, in$\equivclass{(d_1,d_1,d_2)}$ with $-1 < d_1 < d_2$, 
the proposed strategy depends on whether $\rep(\equivclass{(d_1,d_1,d_2)}) =  (d_1, d_1, d_2) \textit{ or }(d_2, d_1, d_1) $. 
	\begin{itemize}[parsep=0cm,itemsep=0cm,topsep=0cm] 	
		\item  If $\rep(\equivclass{(d_1,d_1,d_2)}) =  (d_1, d_1, d_2)$ then the two symmetrical robots get closer to the last robot.  
		The edge relation leads to $((d_1, d_1, d_2), (\clockwise, \still, \counterclockwise))$.
		\item  If $\rep(\equivclass{(d_1,d_1,d_2)}) =  (d_1, d_1, d_2)$ then the disoriented robot moves. 
		The edge relation leads to $((d_2, d_1, d_1), (\still, \still, \disoriented))$.
	\end{itemize}
	\item If the configuration is rigid ( in$\equivclass{(d_1,d_2,d_3)}$ with $-1< d_1< d_2<d_3$)the edge relation leads to three possibilities : 
	\begin{itemize}[parsep=0cm,itemsep=0cm,topsep=0cm] 
		\item The robot with the minimum view gets closer to its nearest neighbor. 
	In this case the edge relation leads to $((d_1, d_2, d_3), (\clockwise, \still, \still))$.
		\item The robot with the maximum view gets closer to its nearest neighbor.
	 In this case the edge relation leads to $((d_1, d_2, d_3), (\still, \still, \counterclockwise))$.
		\item The robot with the minimum view and the robot with the maximum view get
		closer to their nearest neighbor.
	In this case the edge relation leads to $((d_1, d_2, d_3),$ $(\clockwise, \still, \counterclockwise))$.
		This strategy is the two above strategies made simultaneously.
	\end{itemize}
	Thus the edge relation for rigid configuration leads to: $\{((d_1, d_2, d_3), (a_1, \still, a_2))\}$, 
	with $a_1\in \{\clockwise, \still\}$, $a_2 \in \{\still, \counterclockwise\}$ and $a_1 \neq a_2$.
\end{itemize}

From Theorem~\ref{th:correctness}, one can translate the decision functions for each configuration
into a distributed algorithm.
Among the possible strategies we present below the strategy that moves the robot with the minimum view and
the robot with the maximum view closer to their nearest neighbor in the rigid configurations.
Thus we obtain the following distributed algorithm: if the view of the robot $r$ is 
$view(r)=\{(y,-1,z),(z,-1,y)\}$ with $y<z$, $r$ robot moves in order to increment $z$ and decrement $y$. 
If $view(r)=\{(x,x,z),$ $(z,x,x)\}$ with $x<z$ then $r$ moves to increment $z$ and decrement $x$,
if $view(r)=\{(z,x,z),$ $(z,x,z)\}$ with $x<z$ then $r$ moves in any direction,
if $view(r)=\{(x,y,z),(z,y,x)\}$ with $x<y<z$ then $r$ moves to increment $z$ and decrement $x$,
if $view(r)=\{(y,x,z),(z,x,y)\}$ with $x<y<z$ then $r$ moves to increment $z$ and decrement $y$, 
and when $r$ has a different view than the above, it remains idle.
  
The above algorithm is correct by construction for various values of $n$ ($3 \leq n  \leq 15$, $n = 100$). 
The following theorem proves that  it is also correct for any ring of size $n$. Due to space limitation the proof by induction of the theorem is omitted. 
\begin{theorem}
In a ring of any size $n>3$ starting from any configuration (except periodic ones) the above 3-gathering algorithm 
eventually reaches a gathering configuration. 
\end{theorem}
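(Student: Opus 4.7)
The plan is to prove termination by a strictly increasing potential and correctness by identifying the gathered configuration as the unique configuration of maximum potential. Define $\Phi(C) = \max_i d_i$ over the entries of any representative of $C$, with the convention that $-1<0$. One checks directly that $\Phi(C)\leq n-1$ with equality iff $C=\equivclass{(-1,-1,n-1)}$. The induction is on the nonnegative integer $n-1-\Phi(C)$: the base case is the gathered class, and the inductive step assumes the theorem for every class with strictly larger $\Phi$. It then suffices to show that, for every non-periodic, non-gathered $C$, every class $C'$ reachable in one algorithmic step (against all adversary choices in the disoriented subcases) is non-periodic and satisfies $\Phi(C')>\Phi(C)$; gathering from $C'$ then follows from the induction hypothesis.

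The inductive step proceeds by case analysis along the five configuration families from Section~\ref{sec:game}. For a rigid class $\equivclass{(d_1,d_2,d_3)}$ with $d_1<d_2<d_3$, the simultaneous move of the min-view and max-view robots toward the middle one yields $\equivclass{(d_1-1,d_2-1,d_3+2)}$, which is still rigid unless $d_1=0$ and becomes the 2-tower $\equivclass{(-1,d_2-1,d_3+2)}$; in either case $\Phi$ grows by $2$. For a symmetric class $\equivclass{(d,d,D)}$ with $d<D$, the two non-axis robots converge toward the axis robot, giving $\equivclass{(d-1,d-1,D+2)}$ (the gathered class when $d=0$) and $\Phi$ up by $2$. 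For the symmetric class $\equivclass{(D,d,d)}$ with $D<d$, the disoriented axis robot is moved by the adversary, but the mirror relation collapses both directions to the same class $\equivclass{(D,d-1,d+1)}$, with $\Phi$ up by $1$. For the asymmetric 2-tower $\equivclass{(-1,a,b)}$ with $a<b$, the isolated robot steps along the shorter arc to give $\equivclass{(-1,a-1,b+1)}$ (gathered when $a=0$) and $\Phi$ up by $1$; the disoriented 2-tower reduces to an asymmetric 2-tower after a single adversary-dependent step with $\Phi$ again up by $1$. Each transition is verified by feeding the algorithm's prescribed move tuple into the $\oplus$ operator of Section~\ref{subsec:moves}.

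Preservation of non-periodicity is straightforward: the only periodic class for three robots is $\equivclass{(e,e,e)}$ with $3e=n-3$, and every successor above either contains a $-1$ entry (hence is a tower class, never periodic) or exhibits at least two distinct non-$-1$ entries, so cannot equal $(e,e,e)$. The main technical obstacle I expect is the boundary move where a distance drops to $0$ simultaneously with the creation of a tower; there, the reorganization procedure of Section~\ref{subsec:moves} must be invoked carefully to verify, for instance, that rigid $(0,d_2,d_3)$ evolves to the 2-tower $(-1,d_2-1,d_3+2)$ and that this 2-tower is asymmetric rather than disoriented. The latter reduces to $2(d_2-d_1)\neq n$, which follows from $d_1<d_2<d_3$ and $d_1+d_2+d_3=n-3$ (these constraints force $d_2<n/2$, hence $d_2-d_1<n/2$).
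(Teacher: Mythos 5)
Your argument is correct, and I verified the key computations against the $\oplus$ semantics of Section~\ref{subsec:moves}: rigid $(d_1,d_2,d_3)\mapsto(d_1-1,d_2-1,d_3+2)$, symmetric $(d,d,D)\mapsto(d-1,d-1,D+2)$, symmetric $(D,d,d)\mapsto\equivclass{(D,d-1,d+1)}$ for either adversary choice (the two outcomes are mirror-equivalent), and the two tower cases each raise $\max_i d_i$ by $1$; since $\sum_i d_i=n-3$ and $d_i\geq -1$ force $\max_i d_i\leq n-1$ with equality exactly at $\equivclass{(-1,-1,n-1)}$, and no successor is periodic, your induction on $n-1-\Phi$ closes and even yields the exact round complexity $n-1-\max_i d_i(C_0)$. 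A caveat on the comparison: the paper explicitly omits its proof, so only its announced shape is known --- an induction tied to the ring size, apparently bootstrapping from the tool-verified instances $3\leq n\leq 15$. Your route is genuinely different and arguably cleaner: a single ranking-function argument, uniform in $n$ and independent of the machine-checked base cases, at the price of redoing by hand the case analysis that the tool performed for small $n$. Two minor simplifications: the classification of successors (rigid versus $2$-tower versus disoriented) is not needed, since the induction hypothesis only requires ``non-periodic with strictly larger $\Phi$''; in particular your final worry about $2(d_2-d_1)\neq n$ is dispensable, as the successor $(-1,d_2-1,d_3+2)$ being disoriented would need $d_2-1=d_3+2$, which already contradicts $d_2<d_3$.
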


\section{Conclusions and discussions}

We proposed a formal method based on reachability games that permits to automatically generate distributed algorithms for mobile autonomous robots solving a global task. The task of gathering on a ring-shaped network was used as a case study. We hereby discuss current limitations and future works.

While our construction generates algorithms for a particular number of robots $k$ and ring size $n$, the game encoding we propose enables to easily tackle the gathering problem for any given $k$ and $n$, provided as inputs, since $k$ and $n$ are parameters of the arena described in Section~\ref{sec:game}. Also, we focused on the atomic FSYNC and SSYNC models. Breaking the atomicity of Look-Compute-Move cycles (that is, considering automatic algorithm production for the ASYNC model~\cite{FPS12b}) implies that robots cannot maintain a current global view of the system (their own view may be outdated), nor be aware of the view of other robots (that may be outdated as well). Then, our two-players game encoding is not feasible anymore. A natural approach would be to use distributed games, but they are generally undecidable as previously stated. So, a completely new approach is required for the automatic generation of non-atomic mobile robot algorithms.
 
The problem of synthesis for parameterized systems is a challenging path for future research. Also, the size of the game increases quickly with the number of robots; it is expected that to-be-discovered optimizations and/or heuristics will help bringing algorithm production more practical. Finally, we believe that part of our encoding (typically, configurations and transitions between configurations) can be reused for different problems on ring-shaped networks, such as exploration with stop or perpetual exploration and easily extended to other topologies.

\bibliographystyle{abbrv}
 \bibliography{rob}

\end{document}